\title{\LARGE \bf
Agile Temporal Discretization for Symbolic Optimal Control
}
\author{Adrien Janssens \and Adrien Banse \and Julien Calbert \and Raphaël M. Jungers
\thanks{R. M. Jungers is a FNRS honorary Research Associate. This project has received funding from the European Research Council (ERC) under the European Union's Horizon 2020 research and innovation programme under grant agreement No 864017 - L2C, from the Horizon Europe programme under grant agreement No101177842 - Unimaas, and from the ARC (French Community of Belgium)- project name: SIDDARTA. Adrien Banse is supported by the French Community of Belgium in the framework of a FNRS/FRIA grant.}
\thanks{A. Janssens, A. Banse, J. Calbert and R. Jungers are with the ICTEAM, UCLouvain (Louvain-la-Neuve, Belgium). E-mail adresses: \texttt{adrien.janssens@student.uclouvain.be}, \texttt{\{adrien.banse, julien.calbert, raphael.jungers\}@uclouvain.be}%
}}
\newcommand{\jc}[1]{\textcolor{blue}{[JC]: #1}}
\newcommand{\modif}[1]{\ifthenelse{\equal{\discussion}{true}}{\textcolor{red}{#1}}}
\newcommand{\vect}[1]{\boldsymbol{#1}} 
\newcommand{\R}{\mathbb{R}}
\newcommand{\N}{\mathbb{N}}
\newcommand{\Z}{\mathbb{Z}}
\newcommand{\myemptyset}{\varnothing}
\newcommand{\argmin}{\operatorname{argmin}} 
\newcommand{\dom}{\operatorname{dom}} 
\newtheorem{theorem}{Theorem}
\newtheorem{proposition}{Proposition}
\newtheorem{definition}{Definition}
\newcommand{\set}[1]{\mathcal{#1}}
\newcommand{\Sys}{\mathcal{S}}
\newcommand{\Cont}{\mathcal{C}}
\newcommand{\FRR}{\operatorname{FRR}}
\begin{document}
\maketitle
\thispagestyle{empty}
\pagestyle{empty}

\newtheorem{remark}{Remark}

\begin{abstract}
As control systems grow in complexity, abstraction-based methods have become essential for designing controllers with formal guarantees. However, a key limitation of these methods is their reliance on discrete-time models, typically obtained by discretizing continuous-time systems with a fixed timestep. This discretization leads to two major problems: when the timestep is small, the abstraction includes numerous stuttering and spurious trajectories, making controller synthesis suboptimal or even infeasible; conversely, a large time step may also render control design infeasible due to a lack of flexibility. In this work, drawing inspiration from Reinforcement Learning concepts, we introduce temporal abstractions, which allow for a flexible timestep. We provide a method for constructing such abstractions and formally establish their correctness in controller design. Furthermore we show how to apply these to optimal control under reachability specifications. Finally we showcase our methods on two numerical examples, highlighting that our approach leads to controllers that achieve a lower worst-case control cost.
\end{abstract}


\section{INTRODUCTION} \label{sec:intro}

The growing complexity of engineered dynamical systems has made controller design increasingly arduous. In the literature, this paradigm shift has been coined by the academic and industrial communities as the \emph{cyber-physical revolution} \cite{kim2012cyber, alur2015principles, lee2009introducing}. On the other hand, it is crucial to provide safety and/or performance guarantees for such systems \cite{knight2002safety}. In particular, we focus on \emph{reachability analysis} \cite{althoff2021set}: designing a controller that drives the system from an initial set to a target set while avoiding obstacles. Being able to solve such problems for complex dynamical systems is crucial in many applications (see \cite{althoff2014online,gillula2011applications} for applications to autonomous driving and robotic aerial vehicles). 

In this context, \emph{abstraction-based control}, also known as symbolic control, is a promising tool to provide formal reachability guarantees \cite{tabuada2009verification}. Every abstraction-based control method follows the same systematic procedure: 
\begin{enumerate}
    \item \emph{Abstraction} -- The concrete system is approximated by a finite-state mathematical object (the \emph{abstract} system); 
    \item \emph{Abstract controller synthesis} -- A discrete controller is designed for the abstract system;  
    \item \emph{Concretization} -- From the discrete controller, a concrete controller is refined for the original (\emph{concrete}) dynamical system.
\end{enumerate}
The validity of this approach and the procedure to concretize the discrete controller are provided by a mathematical \emph{simulation relation}~\cite{tabuada2009verification,calbert2024classification} between the abstract and concrete systems, such as the \emph{feedback refinement relation} ($\FRR$)~\cite{reissig2016feedback}.

Continuous-time deterministic dynamical systems are abstracted by partitioning the state space into a finite set of \emph{cells}, selecting a finite set of inputs, considering discrete-time dynamics with timestep $\tau$. The abstract states correspond to the cells, and a transition between $s_1$ and $s_2$ with input $u$ is added if the set of reachable states from the cell $s_1$ intersects with the cell $s_2$. In this work, we focus on the time discretization. Indeed, in the context of reachability analysis, fixing a too small constant timestep $\tau$ leads to conservatism for at least two reasons: 
\begin{enumerate}
    \item \emph{Stuttering transitions} -- If the timestep is small, then the reachable set from an initial cell may intersect with the cell itself, leading to a self-loop, or \emph{stuttering transition} in the automaton. When performing reachability analysis on the abstract system, a possible behaviour is that the state stays infinitely in the initial cell, whereas this may be impossible in reality. This kind of transition is illustrated in yellow on Figure~\ref{fig:motiv}.
    \item \emph{Spurious trajectories} -- At each step of a trajectory in the abstract system, the information of where \emph{exactly} the state is within the cell is lost. As a consequence, an error is propagated with the number of steps, and more and more trajectories are valid in the abstraction but cannot happen in reality: such trajectories are called \emph{spurious trajectories}. This phenomenon is also illustrated in red (two small timesteps) and in blue (one long timestep) in Figure~\ref{fig:motiv}.
\end{enumerate}
\begin{figure}
    \centering
    \includegraphics[width=0.85\linewidth]{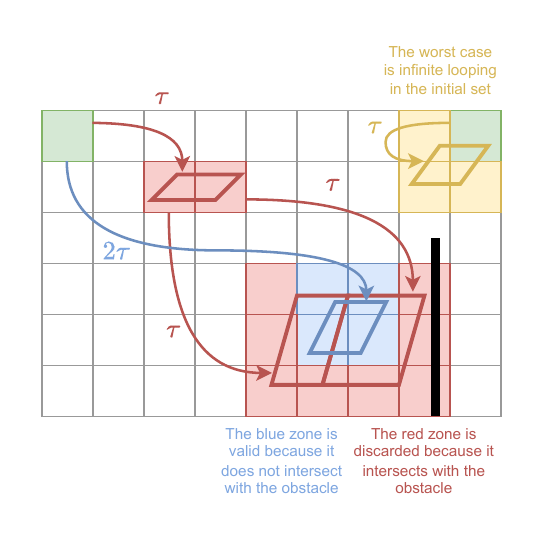}
    \caption{Illustration of stuttering transitions (yellow) and spurious trajectories (red and blue). \textbf{Yellow}: the timestep is too small, and therefore the reachable set intersects with the starting cell. \textbf{Red}: from the initial green cell, two cells are reachable with a jump of length $\tau$. At the next step, the two reachable sets from both cells are computed. Applying the same input for $2\tau$ is discarded because 3 cells intersect with the black obstacle. \textbf{Blue}: we now directly consider a timestep corresponding to a larger timestep, $2\tau$. The reachable set intersects with 4 cells that do not intersect with the obstacle, making the input valid.}
    \label{fig:motiv}
\end{figure}
This conservatism leads either to infeasible problems, or to sub-optimal solutions in the context of optimal control. On the other hand, considering small timesteps allows for more flexibility, and is sometimes needed in order to ensure feasibility of the problem. 

\paragraph*{Contributions} In this work, we propose to take into account this trade-off by introducing \emph{temporal abstractions}, in which multiple timesteps are allowed. We prove the correctness of such abstractions, thereby providing formal guarantees. We then provide the necessary theoretical framework to use these abstractions in the context of optimal control under reachability specifications. While our approach does not directly tackle the curse of dimensionality inherent to abstraction-based control, numerical experiments show that using multiple timesteps yields lower worst-case cost controllers than those obtained with a finer input space discretization.

\paragraph*{Related works} Our work takes inspiration from the notion of \emph{temporal abstraction} in the Reinforcement Learning community (see e.g. \cite{sutton1999between} for a survey), but aims at combining this powerful framework and the formal guarantees that abstraction-based methods can provide. The works \cite{ivanova2020lazy, kader2019safety} have already investigated flexible timesteps in the context of abstraction-based techniques but for safety analysis, and with a different abstraction construction.

\paragraph*{Outline} The rest of this paper is organized as follows. In Section~\ref{sec:preliminaries}, we introduce all the necessary formalism to ensure correctness of the approach. In Section~\ref{sec:main}, we introduce the concept of \emph{temporal abstractions} and how to apply them to an optimal control problem. Finally, in Section~\ref{sec:experiments} we demonstrate the quality of our method on two examples.

\paragraph*{Notations}
%
Given two sets $\set{X}_1,\set{X}_2$, we define a \emph{single-valued map} as $f:\set{X}_1\rightarrow \set{X}_2$, while a~\emph{set-valued map} is defined as $F:\set{X}_1\rightarrow 2^{\set{X}_2}$, where $2^{\set{X}_2}$ is the power set of $\set{X}_2$, i.e., the set of all subsets of $\set{X}_2$. 
The image of a subset $\Omega\subseteq \set{X}_1$ under $F:\set{X}_1\rightarrow 2^{\set{X}_2}$ is denoted $F(\Omega)$. 
Given two sets \(\set{X}_1\) and \(\set{X}_2\), we identify a binary \emph{relation} \(R \subseteq \set{X}_1 \times \set{X}_2\) with set-valued maps, i.e., \(R(x_1) = \{x_2 \mid (x_1, x_2) \in R\}\) and \(R^{-1}(x_2) = \{x_1 \mid (x_1, x_2) \in R\}\). A relation \(R \subseteq \set{X}_1 \times \set{X}_2\) is said to be \emph{strict} if \(R(x_1) \neq \myemptyset\) for all \(x_1 \in \set{X}_1\). 
When \(\set{X}_2\) is finite, the mapping \(R: \set{X}_1 \to 2^{\set{X}_2}\) is referred to as a \emph{quantizer}.
%
The set of all signals that take their values in $\set{B}$ are defined on intervals of the form $[0;N) =[0;N-1]$ is denoted $\set{B}^{\infty}$, $\set{B}^{\infty} = \bigcup_{N\in \N\cup \{\infty\}}\set{B}^{[0;N)}$. 
Given a set-valued map $f:\set{X}\rightarrow 2^{\set{Y}}$ and $\vect{x}\in \set{X}^{[0;N)}$, we denote by 
$f(\vect{x}) = \{\vect{y}\in \set{Y}^{[0;N)} \mid \forall k\in [0;N):\ y(k)\in f(x(k))\}$.

\section{PRELIMINARIES}\label{sec:preliminaries}
In this section, we introduce the control formalism and the symbolic control framework required to ensure the correctness of the proposed approach.
\subsection{Control framework}
\begin{definition} \label{def:transition_time_system}
    A \textit{transition system} is a triple $\Sys=(\set{X}, \set{U}, F)$, where $\set{X}$ and $\set{U}$ are respectively the sets of \emph{states} and \emph{inputs}, and $F: \set{X} \times \set{U} \to 2^{\set{X}}$ is the \emph{transition map} such that
    \begin{equation}\label{eq:transition_map}
        x(k+1) \in F(x(k), u(k)).
    \end{equation} 
    A tuple $(\vect x,\vect u)\in \set{X}^{[0;N)}\times \set{U}^{[0;N)}$ is a \emph{trajectory} of length $N$ of $\Sys$ starting at $x(0)$ if $N\in\N\cup \{\infty\}$ and~\eqref{eq:transition_map} holds for all $k\in [0;N-1)$.~\hfill~$\triangle$
\end{definition}

Given a transition system $\Sys = (\set{X}, \set{U}, F)$, we introduce the set-valued operator of \emph{available inputs}, defined as \(\set{U}_{\Sys}(x) = \{ u \in \set{U} \mid F(x, u) \ne \myemptyset \}\), which gives the set of inputs \(u\) available at a given state \(x\).
For a given \(u \in \set{U}\), the \emph{reachable set} of \(\Omega \subseteq \set{X}\) is defined as  
$F(\Omega, u) := \bigcup_{x \in \Omega} F(x, u)$.

We use a set-valued function to model perturbations and non-determinism in a unified formalism. We say that the transition system is deterministic if for every state \(x \in \set{X}\) and input \(u \in \set{U}\), \(F(x, u)\) is either empty or a singleton

%
A \emph{finite} transition system refers to a system characterized by finitely many states and inputs. We define the \emph{behaviour} of a system as its set of \emph{maximal} trajectories.
\begin{definition}
    Given the system $\Sys$ in Definition~\ref{def:transition_time_system}, the set \begin{multline}
        \set{B}(\Sys)=\{ (\vect x,\vect u) \mid (\vect x,\vect u)\text{ is a trajectory of }\Sys
        \text{ on } [0;N),\\
        \text{ and if }N<\infty \text{ then }
        F(x(N-1),u(N-1)) = \myemptyset\},
    \end{multline}
    is called the \emph{behaviour} of $\Sys$.
    \hfill $\triangle$
\end{definition}
For clarity, we focus on \emph{static} controllers, though most results extend naturally to dynamical controllers.
\begin{definition}
    A \emph{static controller} for a transition system \(\Sys = (\set{X}, \set{U}, F)\) is a set-valued map \(\Cont : \set{X} \to 2^{\set{U}}\) satisfying \(\forall x \in \set{X}, \ \Cont(x) \subseteq \set{U}(x)\). The \emph{domain} of \(\Cont\) is defined as \(\dom(\Cont) = \{x \in \set{X} \mid \Cont(x) \ne \myemptyset\}\). The \emph{controlled system} is the transition system \((\set{X}, \set{U}, F_{\Cont})\) where  
    \[
    x^+ \in F_{\Cont}(x, u) \Leftrightarrow u \in \Cont(x) \land x^+ \in F(x, u), 
    \]
    and is denoted \(\Cont \times \Sys\).
    \hfill $\triangle$
\end{definition}
This controller is \emph{static} because the allowed inputs at each state depend only on the current state, not on the past trajectory. We now define the control problem.
\begin{definition}\label{def:specification}
    Given a transition system \(\Sys = (\set{X},\set{U}, F)\), a \emph{specification} \(\Sigma\) is any subset \(\Sigma \subseteq (\set{X} \times \set{U})^{\infty}\). The system \(\Sys\) \emph{satisfies} \(\Sigma\) if \(\set{B}(\Sys) \subseteq \Sigma\). A \emph{control problem} is a pair \((\Sys, \Sigma)\), and a controller \(\Cont\) \emph{solves} it if \(\set{B}(\Cont \times \Sys) \subseteq \Sigma\).
    \hfill $\triangle$
\end{definition}

\subsection{Abstraction-based control}
Given the concrete system \(\Sys_1 = (\set{X}_1, \set{U}_1, F_1)\) and the abstract system \(\Sys_2 = (\set{X}_2, \set{U}_2, F_2)\), we aim to construct a concrete controller \(\Cont_1 : \set{X}_1 \rightarrow 2^{\set{U}_1}\) from an abstract controller \(\Cont_2 : \set{X}_2 \rightarrow 2^{\set{U}_2}\), as part of the third step of the abstraction approach described in the introduction. To do so, the relation \(R\subseteq \set{X}_1\times \set{X}_2\) must impose conditions on local dynamics, ensuring that input choices lead to consistent state transitions across associated states. One such relation is the \emph{feedback refinement relation}.

\begin{definition}[{\cite[Definition V.2]{reissig2016feedback}}]\label{def:FRR}
Given two transition systems $\Sys_1=(\set{X}_1, \set{U}_1 ,F_1)$ and $\Sys_2=(\set{X}_2, \set{U}_2, F_2)$, a relation $R \subseteq \set{X}_1\times \set{X}_2$ is a \emph{feedback refinement relation} from $\Sys_1$ to $\Sys_2$, denoted $\Sys_1\preceq_R^{\FRR} \Sys_2$,~if $\forall (x_1,x_2)\in R$, the conditions
\begin{gather}
\set{U}_{\Sys_2}(x_2)\subseteq \set{U}_{\Sys_1}(x_1) \\
\forall u\in\set{U}_{\Sys_2}(x_2):\  R(F_1(x_1, u))\subseteq  F_2(x_2,u)
\end{gather}
hold. \hfill $\triangle$
\end{definition}

The feedback refinement relation benefits from a simple concretization scheme as established by the following proposition.
\begin{proposition}[{\cite[Theorem~V.4]{reissig2016feedback}}]\label{prop:FRR}
    Let \(\Sys_1 =(\set{X}_1,\set{U}_1,F_1)\) and \(\Sys_2=(\set{X}_2,\set{U}_2,F_2)\) be transition systems, and let \(R \subseteq \set{X}_1 \times \set{X}_2\). If \(\Sys_1 \preceq_R^{\FRR} \Sys_2\), then for any controller \(\Cont_2\), the controller \(\Cont_1 = \Cont_2 \circ R\) ensures $\set{B}(\Cont_1 \times \Sys_1) \subseteq R^{-1}(\set{B}(\Cont_2 \times \Sys_2))$.
    Consequently, if \(\Sigma_1 \subseteq R^{-1}(\Sigma_2)\) and \(\Cont_2\) solves \((\Sys_2, \Sigma_2)\), then \(\Cont_1\) solves \((\Sys_1, \Sigma_1)\).
\end{proposition}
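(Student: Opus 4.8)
The plan is to establish the behaviour inclusion $\set{B}(\Cont_1\times\Sys_1)\subseteq R^{-1}(\set{B}(\Cont_2\times\Sys_2))$ by a forward induction that lifts a concrete controlled trajectory to an $R$-related abstract one, and then to derive the specification statement by a short inclusion chase. As a preliminary I would record that $\Cont_1=\Cont_2\circ R$ is a genuine controller for $\Sys_1$: writing $\Cont_1(x_1)=\bigcup_{x_2\in R(x_1)}\Cont_2(x_2)$ and using, for $(x_1,x_2)\in R$, the first $\FRR$ condition $\set{U}_{\Sys_2}(x_2)\subseteq\set{U}_{\Sys_1}(x_1)$ together with $\Cont_2(x_2)\subseteq\set{U}_{\Sys_2}(x_2)$, one gets $\Cont_1(x_1)\subseteq\set{U}_{\Sys_1}(x_1)$, so $\Cont_1\times\Sys_1$ is well defined.

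Next I would fix $(\vect x_1,\vect u)\in\set{B}(\Cont_1\times\Sys_1)$ of length $N\in\N\cup\{\infty\}$ and build $\vect x_2\in\set{X}_2^{[0;N)}$ step by step, maintaining $(x_1(k),x_2(k))\in R$. Since $u(0)\in\Cont_1(x_1(0))$, there is $x_2(0)\in R(x_1(0))$ with $u(0)\in\Cont_2(x_2(0))$; inductively, given $x_2(k)$ with $u(k)\in\Cont_2(x_2(k))$ and a successor $x_1(k+1)\in F_1(x_1(k),u(k))$, the second $\FRR$ condition yields $R(x_1(k+1))\subseteq R(F_1(x_1(k),u(k)))\subseteq F_2(x_2(k),u(k))$, and whenever $u(k+1)\in\Cont_1(x_1(k+1))$ I can pick $x_2(k+1)\in R(x_1(k+1))$ with $u(k+1)\in\Cont_2(x_2(k+1))$, which lies in $F_2(x_2(k),u(k))$ as required for an admissible transition of $\Cont_2\times\Sys_2$. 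For $N=\infty$ this directly produces an infinite, hence maximal, trajectory $(\vect x_2,\vect u)$ of $\Cont_2\times\Sys_2$.

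The subtle case is $N<\infty$, where maximality means $F_{\Cont_1}(x_1(N-1),u(N-1))=\myemptyset$ and I must make the constructed abstract trajectory blocked at its last state rather than extendable. The key observation is that necessarily $u(N-1)\notin\Cont_1(x_1(N-1))$: otherwise $u(N-1)\in\set{U}_{\Sys_1}(x_1(N-1))$ by the controller property, so $F_1(x_1(N-1),u(N-1))\neq\myemptyset$, contradicting the block. Hence $u(N-1)\notin\Cont_2(x_2)$ for every $x_2\in R(x_1(N-1))$. I would then run the induction up to $x_2(N-2)$, choose any $x_2(N-1)\in R(x_1(N-1))$ --- this uses strictness of $R$, i.e.\ that the abstraction covers every concrete state --- observe $x_2(N-1)\in F_2(x_2(N-2),u(N-2))$ so that $(\vect x_2,\vect u)$ is a trajectory, and conclude $F_{\Cont_2}(x_2(N-1),u(N-1))=\myemptyset$ from $u(N-1)\notin\Cont_2(x_2(N-1))$, so the trajectory is maximal. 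In every case $(x_1(k),x_2(k))\in R$ for all $k$, i.e.\ $(\vect x_1,\vect u)\in R^{-1}(\set{B}(\Cont_2\times\Sys_2))$.

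Finally, the specification claim follows mechanically: if $\Cont_2$ solves $(\Sys_2,\Sigma_2)$ then $\set{B}(\Cont_2\times\Sys_2)\subseteq\Sigma_2$, so by monotonicity of $R^{-1}$ and the behaviour inclusion, $\set{B}(\Cont_1\times\Sys_1)\subseteq R^{-1}(\set{B}(\Cont_2\times\Sys_2))\subseteq R^{-1}(\Sigma_2)$, and the assumed relation between $\Sigma_1$ and $R^{-1}(\Sigma_2)$ closes the argument. I expect the main obstacle to be precisely the terminal-index analysis in the finite-horizon case --- guaranteeing that the lifted abstract run is itself maximal --- which is where the controller property and strictness of $R$ enter; the infinite-horizon case and the specification corollary are routine.
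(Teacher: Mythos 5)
The paper itself gives no proof of this proposition---it is quoted from Reissig et al.---so your argument stands on its own, and its core is correct: the behaviour inclusion is proved by exactly the standard lifting construction. Your interior induction step is sound (at each $k$ with $u(k)\in\Cont_1(x_1(k))$ there is a related $x_2(k)$ with $u(k)\in\Cont_2(x_2(k))\subseteq\set{U}_{\Sys_2}(x_2(k))$, and the second $\FRR$ condition gives $R(x_1(k+1))\subseteq F_2(x_2(k),u(k))$), and your terminal-index analysis is the right place to be careful: maximality of the concrete run together with $\Cont_1(x)\subseteq\set{U}_{\Sys_1}(x)$ forces $u(N-1)\notin\Cont_1(x_1(N-1))$, hence $u(N-1)\notin\Cont_2(x_2)$ for every $x_2\in R(x_1(N-1))$, so the lifted run is blocked. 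You are also right that strictness of $R$ is needed only there (for interior states $\Cont_1(x_1(k))\neq\myemptyset$ already forces $R(x_1(k))\neq\myemptyset$); note that the proposition as stated does not list strictness, but the paper assumes it from Section III on and it is built into the cited definition of a feedback refinement relation, so invoking it is legitimate. A cosmetic point: your base case asserts $u(0)\in\Cont_1(x_1(0))$, which fails for maximal trajectories of length $N=1$; your terminal treatment already covers that degenerate case, but say so.

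The one genuine gap is the final step, which you dismiss as ``the assumed relation between $\Sigma_1$ and $R^{-1}(\Sigma_2)$ closes the argument.'' Taken literally, it does not. Your chain yields $\set{B}(\Cont_1\times\Sys_1)\subseteq R^{-1}(\Sigma_2)$, where $R^{-1}$ is lifted to signals existentially as in the paper's notation; to conclude $\set{B}(\Cont_1\times\Sys_1)\subseteq\Sigma_1$ you need $R^{-1}(\Sigma_2)\subseteq\Sigma_1$ (every concrete signal related to some signal of $\Sigma_2$ belongs to $\Sigma_1$), which is precisely the ``$\Sigma_2$ abstracts $\Sigma_1$'' condition of the cited theorem, and which is what the usual reach-avoid construction ($\set{I}_2\supseteq R(\set{I}_1)$, $R^{-1}(\set{F}_2)\subseteq\set{F}_1$, $R^{-1}(\set{X}_2\setminus\set{O}_2)\cap\set{O}_1=\myemptyset$) actually delivers. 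The hypothesis as printed, $\Sigma_1\subseteq R^{-1}(\Sigma_2)$, points the other way and gives no information about membership in $\Sigma_1$; and under the alternative, universal-preimage reading of $R^{-1}$ that would make that inclusion useful, the behaviour-inclusion part of the proposition would itself fail. So make the direction explicit in your write-up and close the argument with $R^{-1}(\Sigma_2)\subseteq\Sigma_1$, flagging the printed inclusion as a reversed statement of the condition rather than quietly gesturing at it.
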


Altough $\Sys_2$ need not be finite, practical use of graph-theoretic tools for controller synthesis requires a finite abstraction (see Section~\ref{sec:experiments}).

\subsection{Continuous-time systems}
In this work, we focus on the control of continuous-time systems.
\begin{definition} \label{def:continuous_time_system}
    A \emph{continuous-time system} is a triple $\mathcal{D}=(\set{X}, \set{U}, f)$, where $\set{X} \subseteq \R^n$ and $\set{U}\subseteq \R^m$ are respectively the sets of \emph{states} and \emph{inputs}, and $f: \R^n \times \set{U} \to \R^n$ is the vector field such that
    \begin{equation}\label{eq:differential_equation}
        \dot{x}(t) = f(x(t), u(t)).
    \end{equation}
    We assume that $f(\cdot, u)$ is locally Lipschitz for all $u\in\set{U}$. 
    \hfill $\triangle$
\end{definition}

In the sequel, $\phi$ denotes the general solution of the system associated with~\eqref{eq:differential_equation} for
constant inputs. That is, if $x_0\in\R^n$, $u\in \set{U}$, and $f(\cdot, u)$ is locally Lipschitz, then $\phi(\cdot, x_0, u)$ is the unique solution of the initial value problem $\dot{x}=f(x,u)$, $x(0)=x_0$~\cite{hartman2002ordinary}.

\section{MAIN RESULTS} \label{sec:main}

In this section, we introduce \emph{temporal abstractions}. We start by providing formal definitions of both the concrete systems $\Sys_1$ and the abstract systems $\Sys_2$. We then provide conditions on the abstract system $\Sys_2$ for \(\Sys_1 \preceq_R^{\FRR} \Sys_2\) to hold (see Theorem~\ref{th:formal_abstraction}). We then show how to use the latter for optimal control of a system under reach-avoid specifications. We show that the developed abstraction-based techniques lead to optimality guarantees (see Theorem~\ref{th:controller_from_value_function}). 


\subsection{Temporal abstractions}
We begin by introducing a general framework to frame a continuous-time system as a transition system without loss of generality, by extending the input space:
\begin{definition} \label{def:multisam}
    Consider a continuous-time system \(\mathcal{D} = (\set{X}, \set{U}, f)\). This system can be framed as a transition system \(\Sys_1 = (\set{X}_1, \set{U}_1, F_1)\) as follows: 
    \begin{itemize}
        \item The state space is unchanged $\set{X}_1 = \set{X}$;
        \item The input space $\set{U}_1$ is defined as $\set{U}_1 := \set{U} \times [0;\infty)$; 
        \item The transition function $F_1 : \set{X}_1 \times \set{U}_1 \to 2^{\set{X}_1}$ is given by
        \begin{equation}\label{eq:continuous_to_transition}
            F_1(x, (u, t)) := \{\phi(t, x, u)\}.
        \end{equation}
    \end{itemize}
    We call this system the \emph{multi-sampled system}. \hfill $\triangle$
\end{definition}


Now, in this work, we apply abstraction-based control to the continuous-time system \(\mathcal{D} = (\set{X}, \set{U}, f)\) (see Definition~\ref{def:continuous_time_system}). Consequently, this approach involves constructing both a transition system \(\Sys_2 = (\set{X}_2, \set{U}_2, F_2)\) and a relation \(R \subseteq \set{X}_1 \times \set{X}_2\) such that \(\Sys_1 \preceq_R^{\FRR} \Sys_2\), where \(\Sys_1 = (\set{X}_1, \set{U}_1, F_1)\) is the transition system associated with \(\mathcal{D}\) as in Definition~\ref{def:multisam}. As proven below, the following conditions on the abstract system $\Sys_2$ and $R$ are sufficient for \(\Sys_1 \preceq_R^{\FRR} \Sys_2\) to hold:
\begin{itemize}
    \item The input space \(\set{U}_2\) is given by
    \begin{equation} \label{eq:abstract_input}
        \set{U}_2 = \set{U}_2' \times \set{T} \subseteq \set{U}_1 = \set{U} \times [0,\infty),
    \end{equation}
    where \(\set{U}_2'\subseteq \set{U}\) represents a subset of the concrete system's inputs, and \(\set{T} \subseteq [0,\infty)\) is a subset of times;  
    \item For all \(u_2 = (u, \tau) \in \set{U}_2\), the transition map satisfies \begin{equation} \label{eq:transition}
    \begin{aligned}
        &F_2(x_2, u_2) \\
        &\,\, \supseteq \{x_2^+ \mid F_1(R^{-1}(x_2), (u, \tau)) \cap R^{-1}(x_2^+) \neq \myemptyset \}.
    \end{aligned}
    \end{equation}
\end{itemize}

The finiteness of \(\set{U}_2\) implies that only a finite number of inputs and sampling times are selected from the original system. As a result, the abstract system \(\Sys_2\) can be interpreted as a \emph{sampled} version of~\(\Sys_1\). When \(\set{T}\) is a singleton, i.e., \(\set{T} = \{\tau\}\) for some \(\tau \in [0,\infty)\), it corresponds to the classical discretization setting with a fixed timestep \(\tau\).

\begin{remark} \label{rem:general_but_uniform}
In this section, we remain general. However, in Section~\ref{sec:experiments}, we will provide an explicit procedure for constructing such abstract system $\Sys_2$ and relation $R$. \hfill $\triangle$
\end{remark}

Now, we prove the correctness of our method by showing that, for any $\Sys_1, \Sys_2$ and $R$ such as defined above, \(\Sys_1 \preceq_R^{\FRR} \Sys_2\). For the remainder of this paper, we assume that \(R\) is strict (see \emph{Notations} paragraph in Section~\ref{sec:intro} for a definition).

\begin{theorem}\label{th:formal_abstraction}
    Let \(\mathcal{D} = (\set{X}, \set{U}, f)\) be a continuous-time system with its associated transition system \(\Sys_1 = (\set{X}_1, \set{U}_1, F_1)\) as in~\eqref{eq:continuous_to_transition}. Consider a transition system \(\Sys_2 = (\set{X}_2, \set{U}_2, F_2)\) and a strict relation \(R \subseteq \set{X}_1 \times \set{X}_2\), with $\set{U}_2$ as in \eqref{eq:abstract_input}, and $R$ and $F_2$ as in \eqref{eq:transition}. Then it holds that \(\Sys_1 \preceq_R^{\FRR} \Sys_2\).\hfill $\triangle$
\end{theorem}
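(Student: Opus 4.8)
The plan is to verify directly the two defining conditions of a feedback refinement relation (Definition~\ref{def:FRR}) for every pair $(x_1, x_2) \in R$. So fix $(x_1, x_2) \in R$; note that by the identification of $R$ with a set-valued map, this means $x_1 \in R^{-1}(x_2)$.

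\textbf{First condition: }$\set{U}_{\Sys_2}(x_2) \subseteq \set{U}_{\Sys_1}(x_1)$. Take $u_2 = (u,\tau) \in \set{U}_{\Sys_2}(x_2)$, i.e.\ $F_2(x_2, u_2) \neq \myemptyset$. I need to show $F_1(x_1, (u,\tau)) \neq \myemptyset$. But by the very construction~\eqref{eq:continuous_to_transition}, $F_1(x_1,(u,\tau)) = \{\phi(\tau, x_1, u)\}$, which is always a nonempty singleton provided $\phi(\tau, x_1, u)$ is defined --- and since $f(\cdot,u)$ is locally Lipschitz the solution exists and is unique on a maximal interval. The only subtlety is whether the trajectory stays in $\set{X}$ / is defined up to time $\tau$; I would argue that since $u_2$ is an available input of $\Sys_2$, the inclusion~\eqref{eq:transition} together with strictness of $R$ forces $\phi(\tau, x_1, u)$ to exist (otherwise $F_1(R^{-1}(x_2),(u,\tau))$ would be empty, making the right-hand side of~\eqref{eq:transition} empty, and one would need $F_2(x_2,u_2) = \myemptyset$). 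Actually the cleanest route: in the multi-sampled system of Definition~\ref{def:multisam}, $F_1(x,(u,t))$ is a singleton whenever $\phi(t,x,u)$ is defined, so $\set{U}_{\Sys_1}(x_1)$ is exactly the set of $(u,t)$ for which the solution survives to time $t$; I would make explicit (or assume, as is standard) that the time horizons in $\set{T}$ are chosen within the domain of definition of the relevant solutions, so this inclusion is immediate.

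\textbf{Second condition: }for all $u_2 = (u,\tau) \in \set{U}_{\Sys_2}(x_2)$, $R(F_1(x_1, u_2)) \subseteq F_2(x_2, u_2)$. Let $x_1^+ \in F_1(x_1, (u,\tau))$, so $x_1^+ = \phi(\tau, x_1, u)$, and let $x_2^+ \in R(x_1^+)$, i.e.\ $x_1^+ \in R^{-1}(x_2^+)$. I must show $x_2^+ \in F_2(x_2, u_2)$. By~\eqref{eq:transition} it suffices to exhibit a point in $F_1(R^{-1}(x_2), (u,\tau)) \cap R^{-1}(x_2^+)$. The natural candidate is $x_1^+$ itself: on one hand $x_1 \in R^{-1}(x_2)$ and $x_1^+ = \phi(\tau, x_1, u) \in F_1(x_1,(u,\tau)) \subseteq F_1(R^{-1}(x_2),(u,\tau))$ (using $F_1(\Omega,u) = \bigcup_{x\in\Omega} F_1(x,u)$); on the other hand $x_1^+ \in R^{-1}(x_2^+)$ by choice of $x_2^+$. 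Hence the intersection is nonempty, so $x_2^+$ lies in the set on the right of~\eqref{eq:transition}, which is contained in $F_2(x_2, u_2)$. This establishes the inclusion and therefore $\Sys_1 \preceq_R^{\FRR} \Sys_2$.

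The argument is essentially a bookkeeping exercise in unfolding the definitions --- the ``main obstacle'' is not deep but lies entirely in the first condition: one must be careful about the domain of the flow $\phi(\tau, x_1, u)$, since $F_1(x_1,(u,\tau))$ is only defined (and only a singleton) when the solution exists up to time $\tau$. I would handle this either by noting it follows from~\eqref{eq:transition} and strictness of $R$ as sketched above, or by adding the mild standing assumption that all $(u,\tau) \in \set{U}_2$ keep the concrete trajectories within the region where $\phi$ is defined; the second, ``dynamics-matching'' condition goes through cleanly with $x_1^+$ serving as the witness point in the intersection.
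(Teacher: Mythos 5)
Your proof is correct and takes essentially the same route as the paper's: verify the two conditions of Definition~\ref{def:FRR} directly, with $x_1^+=\phi(\tau,x_1,u)$ serving as the witness showing the right-hand side of \eqref{eq:transition} contains every $x_2^+\in R(x_1^+)$, exactly as in the paper, which likewise disposes of the input-availability condition ``by construction'' (i.e., implicitly assuming $\phi(\tau,x_1,u)$ is defined so that $F_1(x_1,(u,\tau))$ is a nonempty singleton) and invokes strictness of $R$ only to get $R(x_1^+)\neq\myemptyset$. One small caution: your first, alternative justification of the availability condition via \eqref{eq:transition} does not actually work, since the inclusion there is only $\supseteq$ (so $F_2(x_2,u_2)\neq\myemptyset$ forces nothing about the right-hand side) and $F_1(R^{-1}(x_2),(u,\tau))$ could anyway be nonempty through points of $R^{-1}(x_2)$ other than $x_1$; your fallback reading---the standing assumption that the flow is defined up to every $\tau\in\set{T}$ for the chosen inputs---is the correct one and is what the paper tacitly assumes.
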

\begin{proof}
    For all \((x_1, x_2) \in R\) and \((u, \tau) \in \set{U}_2(x_2)\), we have \(F_2(x_2, u_2) \neq \myemptyset\), which by construction ensures \((x, u) \in \set{U}_1(x_1)\). Given \(\{x_1^+\} = F(x_1, (u, \tau))\), the strictness of \(R\) implies \(R(x_1^+) \neq \myemptyset\) and \(R(x_1^+) \subseteq F_2(x_2, (u, \tau))\).
\end{proof}
\subsection{Application to optimal control}
We begin by defining the considered reach-avoid specification (see Definition~\ref{def:specification}). 
\begin{definition}\label{def:specific_spec}
Given a transition system $\Sys = (\set{X}, \set{U}, F)$, a \emph{reach-avoid} specification associated with sets $\set{I}, \set{F}, \set{O} \subseteq \set{X}$ such that $\set{F}\cap \set{O}= \myemptyset$ is defined as 
\begin{multline}\label{eq:reach_avoid}
\Sigma^{\text{Reach}} = \{ (\vect{x},\vect{u}) \in (\set{X}\times \set{U})^{\infty} \mid x(0) \in \set{I} \Rightarrow \exists N \in \Z_{\ge 0} :\\ \left(x(N) \in \set{F} \land \forall k \in [0;N[: x(k) \notin \set{O}\right) \},
\end{multline}
which enforces that all states in the initial set $\set{I}$ will reach the target $\set{F}$ in finite time while avoiding obstacles in~$\set{O}$. We use the abbreviated notation $\Sigma^{\text{Reach}} = [\set{I}, \set{F}, \set{O}]$ to denote the specification~\eqref{eq:reach_avoid}.
\hfill $\triangle$
\end{definition}

In addition to specifying the state behaviour of the closed-loop system, transition costs can be incorporated to address optimal control problems aiming to minimize the \emph{worst-case} cost.
\begin{definition}[Optimal control problem]\label{def:optimal_control_problem}
    Given a transition system $\Sys=(\set{X},\set{U},F)$, a reach-avoid specification $\Sigma^{\text{Reach}}=[\set{I},\set{F},\set{O}]$, and a transition cost function $c:\set{X} \times \set{U} \rightarrow \R_{>0}\cup \{\infty\}$, the \emph{optimal control problem} involves designing an \emph{optimal controller} $\Cont^*:\set{X}\rightarrow 2^{\set{U}}$ that minimizes the \emph{cost function}
    \begin{equation}\label{eq:cost_controller}
        \mathcal{L}(\Cont) =
        \begin{cases}
            \sup_{x_0 \in \set{I}} \ l_{\Cont}(x_0), & \text{if } \Cont \text{ solves } (\Sys,\Sigma^{\text{Reach}}),\\
            \infty, & \text{otherwise}.
        \end{cases}
    \end{equation}
    Here, $l_{\Cont}(x):\set{I}\rightarrow \R_{\ge 0}$ is defined as 
    \begin{equation}\label{eq:worst_case_cost_traj}
        l_{\Cont}(x_0) = \sup_{(\vect{x},\vect{u})\in \set{B}(\Cont\times \Sys)\mid x(0)=x_0} \sum_{i=0}^{N(\vect{x})-1} c(x(i),u(i)),
    \end{equation}
    where $N(\vect{x})=\min_{k\in \Z_{\ge 0}\mid x(k)\in \set{F}} k$, i.e., $x(N)$ is the first occurrence in the sequence $\vect{x}$ that belongs to $\set{F}$.
    \hfill $\triangle$
\end{definition}

Given a transition system $\Sys = (\set{X}, \set{U}, F)$, a \emph{value function} is a function $v : \set{X} \rightarrow \R_{\ge 0}\cup \{\infty\}$. Given a \emph{transition cost function} $c : \set{X} \times \set{U} \rightarrow \R_{>0} \cup \{\infty\}$ and a value function $v$, we denote by $B_c$ the \emph{Bellman operator} defined as
\begin{equation}\label{eq:bellman_operator}
    [B_c(v)](x) = \min_{u \in \set{U}_{\Sys}(x)} \left( c(x, u) + \max_{x^+ \in F(x, u)} v(x^+) \right).
\end{equation}
The \emph{controller associated} with a value function \(v\) is given by 
    $\Cont(x) = \argmin_{u \in \set{U}_{\Sys}(x)} \left( c(x,u) + \max_{x^+ \in F(x,u)} v(x^+) \right)$.

A value function $v$ that satisfies the Bellman equation~\cite{bellman1957dynamic}
\begin{equation}\label{eq:bellman_equation}
    v(x)= [B_c(v)](x), 
\end{equation}
is commonly referred to as a \emph{Bellman value function}. Its computation is generally intractable for infinite state systems. Instead, we consider \emph{suboptimal value functions}~\cite[Definition 7]{legat2021abstractionbased}.
\begin{definition}[Suboptimal value function]\label{def:back_lyapunov_like_value_function}
    Given a transition system $\Sys = (\set{X}, \set{U}, F)$ and a set $\set{X}_v \subseteq \set{X}$, a value function $v : \set{X} \rightarrow \R_{\ge 0}\cup \{\infty\}$ is a \emph{suboptimal value function} with transition cost function $c : \set{X} \times \set{U} \rightarrow \R_{> 0}\cup \{\infty\}$, if for all $x \in \set{X} \setminus \set{X}_v$, $v(x) = \infty$, and for all $x \in \set{X}_v$, $v(x)$ is finite and $v(x) \ge [B_c(v)](x)$.
    \hfill $\triangle$
\end{definition}

The following proposition demonstrates that the controller associated with a suboptimal value function provides an upper bound on the optimal worst-case cost (see Definition~\ref{def:optimal_control_problem}). This proposition is inspired by works such as \cite{meyn2022CS&RL}. However, the variant we propose for robust control of set-valued dynamics is, to our knowledge, new to the literature.
\begin{proposition}\label{th:controller_from_value_function}
    Consider a transition system \(\Sys = (\set{X}, \set{U}, F)\), a reach-avoid specification \(\Sigma^{\text{Reach}} = [\set{I}, \set{F}, \set{O}]\), and a transition cost function \(c : \set{X} \times \set{U} \rightarrow \R_{>0} \cup \{\infty\}\), such that there exists \(b > 0\) with \(c(x, u) \ge b\) for all \(x \in \set{X}\) and all \(u \in \set{U}_{\Sys}(x)\).
    Let \(v : \set{X} \rightarrow \R_{\ge 0} \cup \{\infty\}\) be a suboptimal value function such that $v(x)= 0$ if and only if $x \in \set{F}$, $v(x)= \infty$ if $x \in \set{O}$, and $v(x) > 0$ otherwise. 
    
    Then, the controller \(\Cont\) associated with \(v\) solves the control problem \((\Sys, [\dom(\Cont), \set{F}, \set{O}])\). If \(\set{I} \subseteq \dom(\Cont)\), then \(\Cont\) also solves \((\Sys, \Sigma^{\text{Reach}})\). Moreover, the value function \(v(x)\) provides an upper bound on the worst-case cost under \(\Cont\), i.e., \(v(x) \ge \ell_{\Cont}(x)\), and the overall performance satisfies \(\mathcal{L}(\Cont) \ge \mathcal{L}(\Cont^*)\), with equality if \(v\) is a Bellman value function.\hfill $\triangle$
\end{proposition}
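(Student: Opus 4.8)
The plan is to establish the three claims in turn: (i) that $\Cont$ solves $(\Sys,[\dom(\Cont),\set{F},\set{O}])$; (ii) that $v$ dominates the worst-case cost $\ell_{\Cont}$ on $\dom(\Cont)$; and (iii) the global optimality bound $\mathcal{L}(\Cont)\ge \mathcal{L}(\Cont^*)$ with equality in the Bellman case. The core technical ingredient for all three is a single monotonicity observation: since $v$ is a suboptimal value function, for every $x\in\dom(\Cont)=\set{X}_v$ and every $u\in\Cont(x)$ we have $v(x)\ge c(x,u)+\max_{x^+\in F(x,u)}v(x^+)$, and in particular $v(x^+)\le v(x)-c(x,u)\le v(x)-b$ for \emph{all} successors $x^+\in F(x,u)$. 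This simultaneously shows that successors stay in $\set{X}_v$ (their value is finite), that they avoid $\set{O}$ (where $v=\infty$), and that $v$ strictly decreases by at least $b$ along every controlled trajectory.

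First I would prove reach-avoidance. Fix $x_0\in\dom(\Cont)$ and any $(\vect x,\vect u)\in\set{B}(\Cont\times\Sys)$ with $x(0)=x_0$. By the monotonicity observation and an induction on $k$, as long as $x(k)\in\dom(\Cont)\setminus\set{F}$ the input $u(k)\in\Cont(x(k))$ is available, $v(x(k+1))\le v(x(k))-b$, and $x(k+1)\notin\set{O}$. Hence $v(x(k))\le v(x_0)-kb$, so after at most $N_0:=\lceil v(x_0)/b\rceil$ steps we must have $v(x(N))=0$, i.e. $x(N)\in\set{F}$ — the trajectory cannot continue decreasing a nonnegative quantity forever, and it cannot get ``stuck'' in $\dom(\Cont)\setminus\set{F}$ because $\Cont$ is defined there so the trajectory is not maximal unless it has reached $\set{F}$. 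Since every intermediate state avoids $\set{O}$, the pair $(\vect x,\vect u)$ satisfies $\Sigma^{\text{Reach}}=[\dom(\Cont),\set{F},\set{O}]$. The specialization to $\Sigma^{\text{Reach}}=[\set{I},\set{F},\set{O}]$ when $\set{I}\subseteq\dom(\Cont)$ is then immediate from Definition~\ref{def:specific_spec}.

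Next I would prove the cost bound $v(x_0)\ge \ell_{\Cont}(x_0)$. Along any controlled trajectory reaching $\set{F}$ at time $N(\vect x)$, telescoping the inequality $v(x(k+1))\le v(x(k))-c(x(k),u(k))$ from $k=0$ to $N(\vect x)-1$ gives $\sum_{i=0}^{N(\vect x)-1}c(x(i),u(i))\le v(x_0)-v(x(N(\vect x)))=v(x_0)$. Taking the supremum over all trajectories in $\set{B}(\Cont\times\Sys)$ starting at $x_0$ yields $\ell_{\Cont}(x_0)\le v(x_0)$. Combined with part (i), $\Cont$ solves $(\Sys,\Sigma^{\text{Reach}})$ whenever $\set{I}\subseteq\dom(\Cont)$, so $\mathcal{L}(\Cont)=\sup_{x_0\in\set{I}}\ell_{\Cont}(x_0)\le\sup_{x_0\in\set{I}}v(x_0)$, which is an upper bound on $\mathcal{L}(\Cont^*)$ once we know $\mathcal{L}(\Cont^*)\le\mathcal{L}(\Cont)$; the latter is just the definition of $\Cont^*$ as the minimizer of $\mathcal{L}$. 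For the equality claim when $v$ is a Bellman value function, I would argue that a Bellman $v$ equals the optimal cost-to-go: reversing the telescoping argument, one shows $v(x_0)\le\ell_{\Cont}(x_0)$ by choosing, at each step, a successor achieving the $\max$ in \eqref{eq:bellman_operator}, so that $v$ coincides with $\ell_{\Cont}$ and with the optimal worst-case cost, giving $\mathcal{L}(\Cont)=\mathcal{L}(\Cont^*)$.

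The main obstacle is the finite-termination argument in part (i): one must carefully handle maximality of trajectories in $\set{B}(\Cont\times\Sys)$ and the possibility that $F(x(k),u(k))$ is a non-singleton set, so that the bound $v(x(k+1))\le v(x(k))-b$ must hold uniformly over \emph{all} branches, not just one. The uniform lower bound $c\ge b>0$ is exactly what converts the monotone decrease of the nonnegative quantity $v$ into a finite stopping time, ruling out Zeno-like behaviour; without it $v$ could decrease by vanishing amounts and never hit $0$. Everything else is a routine telescoping/induction once this observation is in place.
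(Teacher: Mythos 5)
Your proposal is correct and follows essentially the same route as the paper's (much terser) proof: the key step in both is the observation that $v(x^+)\le v(x)-c(x,u)\le v(x)-b$ uniformly over all successors $x^+\in F(x,u)$ for $u\in\Cont(x)$, which forces strict decrease of $v$ by at least $b$ along controlled trajectories and hence finite-time arrival in $\set{F}$ while avoiding $\set{O}$. Your telescoping argument for $v\ge\ell_{\Cont}$ and the remark that $\mathcal{L}(\Cont)\ge\mathcal{L}(\Cont^*)$ holds by definition of $\Cont^*$ simply spell out details the paper leaves implicit.
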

\begin{proof}
    Since $v$ is a suboptimal value function, for all $x \in \set{X}_v$, where $\set{X}_v = \{x \in \set{X} \mid v(x) < \infty\}$, we have $v(x) \ge [B_c(v)](x)$. For $u \in \Cont(x)$, it holds that $\max_{x^+ \in F(x,u)} v(x^+) \le v(x) - c(x,u) \le v(x)-b < v(x)$ since $c(x,u) \ge b$. As a result, the function $v$ decreases by at least $b$ along trajectories of $\Cont \times \Sys$ for all $x \in \set{X}_v$ towards the minimum of $v$, which corresponds to the target set $\set{F}$ while avoiding the obstacle $\set{O}$.
\end{proof}

The following theorem formalizes the transfer of surrogate value functions from the abstract system to the concrete system.
\begin{theorem}[{\cite[Theorem 1]{egidio2022state}}]\label{th:construct_lyapunov_like_value_function}
    Consider two transition systems $\Sys_1=(\set{X}_1,\set{U}_1,F_1)$ and $\Sys_2=(\set{X}_2,\set{U}_2,F_2)$, and a relation $R \subseteq \set{X}_1 \times \set{X}_2$ such that $\Sys_1 \preceq_R^{\FRR} \Sys_2$. Given a transition cost $c_1$ for $\Sys_1$, we consider a transition cost $c_2$ for $\Sys_2$ such that
    \begin{equation}\label{eq:abstract_cost_ASR_det_Lyap}
        c_2(x_2, u) \ge \max_{x_1\in R^{-1}(x_2)} c_1(x_1, u).
    \end{equation}
    If $v_2$ is a suboptimal value function for $\Sys_2$ with transition cost $c_2$, then
    \begin{equation}\label{eq:lyap_from_abstract}
        v_1(x_1) = \min_{x_2 \in R(x_1)} v_2(x_2)
    \end{equation}
    is a suboptimal value function with transition cost $c_1$ for~$\Sys_1$.
\end{theorem}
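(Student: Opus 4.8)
The plan is to verify directly that $v_1$, defined by $v_1(x_1) = \min_{x_2 \in R(x_1)} v_2(x_2)$, meets the two requirements of a suboptimal value function (Definition~\ref{def:back_lyapunov_like_value_function}) with respect to the cost $c_1$: first, that $v_1(x_1)$ is finite precisely on an appropriate subset $\set{X}_{v_1}$ and $+\infty$ elsewhere; and second, that $v_1(x_1) \ge [B_{c_1}(v_1)](x_1)$ for every $x_1$ in that subset. Since $R$ is strict, $R(x_1) \ne \myemptyset$, so the minimum defining $v_1$ is over a nonempty (finite, since $\set{X}_2$ is finite) set and is well defined; we take $\set{X}_{v_1} = \{x_1 \mid R(x_1)\cap \set{X}_{v_2} \ne \myemptyset\}$, on which $v_1$ is finite, and $v_1 = \infty$ outside it. The finiteness bookkeeping is immediate from this definition.

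The substantive step is the Bellman-type inequality. Fix $x_1 \in \set{X}_{v_1}$ and let $x_2 \in R(x_1)$ attain the minimum, so $v_1(x_1) = v_2(x_2)$ and $v_2(x_2) < \infty$. Since $v_2$ is suboptimal for $\Sys_2$, $v_2(x_2) \ge [B_{c_2}(v_2)](x_2)$, i.e. there is $u \in \set{U}_{\Sys_2}(x_2)$ with $v_2(x_2) \ge c_2(x_2,u) + \max_{x_2^+ \in F_2(x_2,u)} v_2(x_2^+)$. I would now use $\Sys_1 \preceq_R^{\FRR} \Sys_2$ in two ways. From the first FRR condition, $\set{U}_{\Sys_2}(x_2) \subseteq \set{U}_{\Sys_1}(x_1)$, so $u$ is available at $x_1$. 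From the second FRR condition, $R(F_1(x_1,u)) \subseteq F_2(x_2,u)$; hence for every $x_1^+ \in F_1(x_1,u)$ and every $x_2^+ \in R(x_1^+)$ we have $x_2^+ \in F_2(x_2,u)$, which gives $v_1(x_1^+) = \min_{x_2^+ \in R(x_1^+)} v_2(x_2^+) \le \max_{x_2^+ \in F_2(x_2,u)} v_2(x_2^+)$. Taking the max over $x_1^+ \in F_1(x_1,u)$ preserves this bound. Combining with the cost comparison \eqref{eq:abstract_cost_ASR_det_Lyap}, namely $c_2(x_2,u) \ge c_1(x_1,u)$ (since $x_1 \in R^{-1}(x_2)$), we obtain
\begin{equation*}
v_1(x_1) = v_2(x_2) \ge c_2(x_2,u) + \max_{x_2^+ \in F_2(x_2,u)} v_2(x_2^+) \ge c_1(x_1,u) + \max_{x_1^+ \in F_1(x_1,u)} v_1(x_1^+) \ge [B_{c_1}(v_1)](x_1),
\end{equation*}
the last inequality because $u \in \set{U}_{\Sys_1}(x_1)$ and $[B_{c_1}(v_1)](x_1)$ is the minimum over all such inputs.

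The main obstacle, though minor, is handling the edge cases around emptiness and infiniteness carefully: one must ensure $F_1(x_1,u)$ is nonempty (so the inner max is over a nonempty set) — which follows from $u \in \set{U}_{\Sys_1}(x_1)$ — and that the chain of inequalities does not silently assume finiteness where it may fail; since $v_2(x_2) < \infty$ forces $c_2(x_2,u) < \infty$ and $v_2(x_2^+) < \infty$ for all relevant $x_2^+$, every term above is finite and the manipulations are legitimate. I would also note the degenerate possibility $F_1(x_1,u) = \myemptyset$ does not arise here, but if the abstraction allowed $F_2(x_2,u) = \myemptyset$ with $u \in \set{U}_{\Sys_2}(x_2)$ it would be excluded by definition of $\set{U}_{\Sys_2}$. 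With these points dispatched, $v_1$ satisfies both defining properties and the theorem follows.
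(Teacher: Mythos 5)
Correct. The paper itself imports this result by citation (Theorem 1 of \cite{egidio2022state}) without reproducing a proof, and your direct verification — choosing a minimizing $x_2 \in R(x_1)$, using the two FRR conditions to transfer an available input $u$ and bound $\max_{x_1^+ \in F_1(x_1,u)} v_1(x_1^+)$ by $\max_{x_2^+ \in F_2(x_2,u)} v_2(x_2^+)$ via $R(F_1(x_1,u)) \subseteq F_2(x_2,u)$ together with strictness of $R$, then applying the cost inequality \eqref{eq:abstract_cost_ASR_det_Lyap} — is exactly the standard argument behind that result, with the edge cases (nonemptiness of $F_1(x_1,u)$, finiteness bookkeeping, attainment of the minima in the finite-abstraction setting) handled correctly.
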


In the special case where \(R\) is deterministic, the value function defined in~\eqref{eq:lyap_from_abstract} reduces to \(v_1 = v_2 \circ R\), as previously established in~\cite[Theorem 8]{legat2021abstractionbased}. This is the case in the classical abstraction approach \cite{reissig2016feedback, tabuada2009verification}, where the abstract state space corresponds to a partition of (a subset of) the concrete state space. In practice, while the finiteness of \(\Sys_2\) enables efficient computation of a Bellman value function \(v_2\) using classical graph-theoretic tools such as dynamic programming~\cite[Chapter 1]{bertsekasvol1}, Theorem~\ref{th:construct_lyapunov_like_value_function} allows it to be translated into a suboptimal value function \(v_1\) for the concrete system.

\section{NUMERICAL EXPERIMENTS}
\label{sec:experiments}

In this section, we illustrate the benefits of our approach on two different optimal control problems. 

Before stating the numerical results, we outline the abstraction construction. We construct our abstractions $\Sys_2$ in the same fashion as in \cite{reissig2016feedback}, but allow multiple timesteps. To construct the abstract system \(\Sys_2 = (\set{X}_2, \set{U}_2, F_2)\), we consider a uniform partition of the concrete state space \(\set{X}_1\). Formally, the abstract state space is defined as $\set{X}_2 := \{q_1, \dots, q_{N_q}\}$, where \(N_q\) is the number of partition cells. The relation \(R \subseteq \set{X}_1 \times \set{X}_2\) is defined such that \((x_1, q) \in R\) if and only if \(x_1 \in \set{X}_1\) lies in the cell represented by \(q \in \set{X}_2\). The transition function \(F_2\) is computed via a growth bound~\cite[Definition VIII.2]{reissig2016feedback} to over-approximate the concrete \emph{reachable sets} $F_1(R^{-1}(q), (u, \tau))$ for all $q \in \set{X}_2$ and $(u, \tau) \in \set{U}_2$. By Theorem~\ref{th:formal_abstraction}, this ensures that \(\Sys_1 \preceq_R^{\FRR} \Sys_2\). The abstract reach-avoid specification \(\Sigma_2^{\text{Reach}}\) is constructed analogously to~\cite{reissig2016feedback}, ensuring that \(\Sigma_1^{\text{Reach}} \subseteq R^{-1}(\Sigma_2^{\text{Reach}})\). Then, by Proposition~\ref{prop:FRR}, any controller \(\Cont_2\) solving \((\Sys_2, \Sigma_2^{\text{Reach}})\) can be concretized into a controller \(\Cont_1 = \Cont_2 \circ R\) that solves \((\Sys_1, \Sigma_1^{\text{Reach}})\).
Finally, to ensure a fair comparison, we keep abstractions of the same size by fixing the number of input-timesteps pairs.

\subsection{Effort-optimal control of a simple pendulum}

To illustrate the theoretical framework of the previous section, we consider a frictionless simple pendulum. This continuous time system is given by \(\set{D}=(\set{X}, \set{U}, f)\) , with \(\set{X} = [-\pi, 2\pi]\), \(\set{U} = [-6,6] \) and dynamics defined by
$f((\theta,\omega),u) = \begin{pmatrix}
    \omega \\
    -\frac{g}{l}\sin{(\theta) + u}
\end{pmatrix},$
where $g=9.81[m/s^2]$ is gravity, $l$ is the length of the rod and $u$ is the input torque. 

Given the transition system $\set{S}_1 = (\set{X}_1, \set{U}_1, F_1)$ associated with $\set{D}$ as in Definition~\ref{def:multisam}, we define the concrete specification $\Sigma_1^{\text{Reach}}=[\set{I}_1, \set{F}_1, \set{O}_1]$ with $\set{I}_1 = [-\frac{5\pi}{180}, \frac{5\pi}{180}] \times[-0.5,0.5]$, $\set{F}_1=[\pi-\frac{17.5}{180}, \pi+\frac{17.5\pi}{180}]\times[-1,1]$ and $\set{O}_1=\myemptyset$. 

We consider a cost function $c_1(x_1,(u, \tau)) := u^2\tau + \epsilon$\footnote{The term $\epsilon$ is added to meet the assumptions of Proposition~\ref{th:controller_from_value_function}, as without this, $u=0$ leads to a zero cost. In practice, this $\epsilon$ can be chosen arbitrarily close to zero.}. One can interpret this cost as a measure of the effort that is needed to reach the specification. The abstract transition cost is then given by $c_2(q, (u,\tau)) := u^2\tau+ \epsilon$, which satisfies condition~\eqref{eq:abstract_cost_ASR_det_Lyap}, since \(c_1\) is independent of the concrete state~\(x_1\). As a result, by Theorem~\ref{th:construct_lyapunov_like_value_function}, given a Bellman value function \(v_2\) and its associated controller \(\Cont_2\) for \(\Sys_2\), one can construct a suboptimal value function \(v_1 = v_2 \circ R\) for \(\Sys_1\), which provides an upper bound on the worst-case cost under the concretized controller \(\Cont_1 = \Cont_2 \circ R\).

Results are summarized in Table~\ref{tab:comparison_effort}. We can observe that with our method we get a feasible abstract problem, which outperforms controllers synthesized with a single timestep. An example trajectory as well as a representation of the suboptimal value function is given in Figure~\ref{fig:optimaleffort}.
\begin{figure}
    \centering    \includegraphics[width=0.75\linewidth]{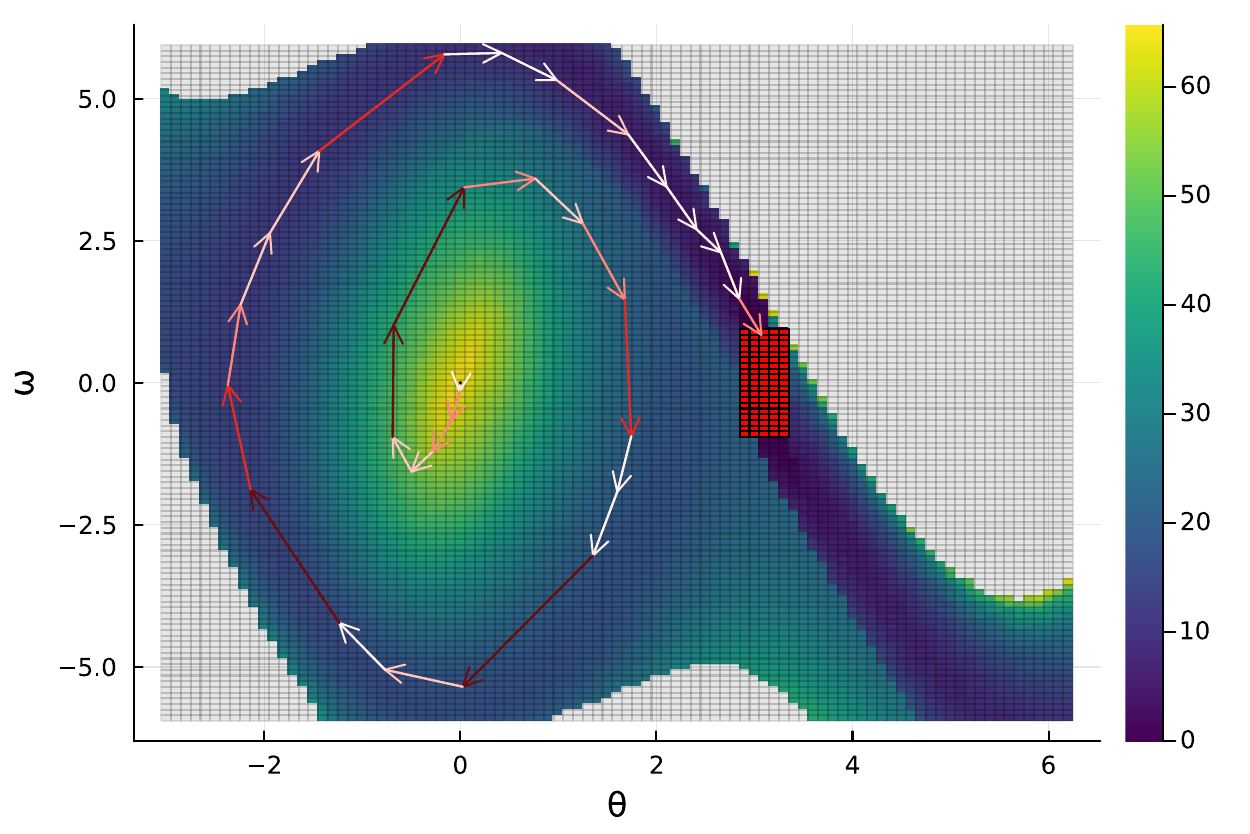}
    \caption{Controlled trajectory of the simple pendulum. The red region denotes the target set \(\set{F}_1\). The heatmap shows the computed suboptimal value function \(v_1\) for \(\Sys_1\). Arrow colors indicate the timestep length used along the closed-loop trajectory—the lighter, the shorter.}
    \label{fig:optimaleffort}
\end{figure}

\begin{table}
\centering
\small
\resizebox{0.80\linewidth}{!}{%
 \begin{tabular}{c|c}
        \toprule
         Timestep(s) [s] & Worst-case effort guarantee ($\mathcal{L}(\Cont_1)$) \\
        \midrule 
        0.1 & Infeasible \\ 
        0.15 & 81.47 \\
        0.2 & Infeasible \\
        0.25 & Infeasible \\
        0.3 & Infeasible \\
        0.1, 0.15, 0.2, 25, 0.3 & 65.70 \\
        \bottomrule
    \end{tabular}
}
\caption{Worst-case effort guarantees for the simple pendulum problem for different individual sampling steps. The last row shows the results of our method combining multiple timesteps, with a coarser input space discretization.}
\label{tab:comparison_effort}
\end{table}

\subsection{Time-optimal path-planning}
We consider a path-planning problem in which a vehicle must navigate through a two-dimensional maze. The system has three state variables: the position \((x, y)\) in the plane and the orientation \(\theta\). It is controlled via two inputs: wheel speed \(u_1\) and steering angle \(u_2\). This model has previously been studied in the context of abstraction-based control~\cite{reissig2016feedback}.

The continuous-time system is given by \(\mathcal{D} = (\set{X}, \set{U}, f)\), with state space \(\set{X} = [-0.1,10.1] \times [-0.1,10.1] \times [-\pi-0.4, \pi+0.4]\), input space \(\set{U} \subseteq [-1,1] \times [-1,1] \), and dynamics defined by
\(
f((x, y, \theta), (u_1, u_2)) =
\begin{pmatrix}
    u_1 \cos(\alpha + \theta) \cos(\alpha)^{-1} \\
    u_1 \sin(\alpha + \theta) \cos(\alpha)^{-1} \\
    u_1 \tan(u_2)
\end{pmatrix},
\)
where \(\alpha = \arctan\left( \frac{\tan(u_2)}{2} \right)\).

Given the transition system \(\Sys_1 = (\set{X}_1, \set{U}_1, F_1)\) associated with \(\mathcal{D}\), we define the concrete specification \(\Sigma_1^{\text{Reach}}\), which constrains only the first two state variables, as illustrated in Figure~\ref{fig:path_planning} and its caption.


We consider the concrete transition cost \(c_1(x_1, (u, \tau)) := \tau\), which corresponds to minimizing the total time required to satisfy the reach-avoid specification. The abstract transition cost is \(c_2(q, (u, \tau)) := \tau\), which satisfies condition~\eqref{eq:abstract_cost_ASR_det_Lyap}, since \(c_1\) is independent of the concrete state \(x_1\).

\begin{table}
\centering
\small
\resizebox{0.85\linewidth}{!}{
 \begin{tabular}{c|c}
        \toprule
         Timestep(s) [s] & Worst-case time guarantee ($\mathcal{L}(\Cont_1)$) [s] \\
        \midrule
        0.2 & Infeasible \\
        0.5 & Infeasible \\
        0.8 & 144.0 \\
        0.2, 0.5, 0.8 & 114.60 \\
        \midrule
        0.3, 0.7, 1.1, 1.5, 1.9 & 94.1 \\
        \bottomrule
    \end{tabular}
}
\caption{Worst-case time guarantees for the path planning problem, for different individual sampling steps. The last 2 rows shows the result of our method combining multiple timesteps, with a coarser input space discretization. The last row is obtained with the data-driven approach (Subsection~\ref{subsec:data}.}
\label{tab:comparison}
\end{table}
Results are summarized in Table~\ref{tab:comparison}. Again, a combination of timesteps is used and the worst-case cost is improved. See on Figure~\ref{fig:path_planning} an example trajectory obtained with our approach.
\begin{figure}
    \centering
    \includegraphics[width=0.75\columnwidth]{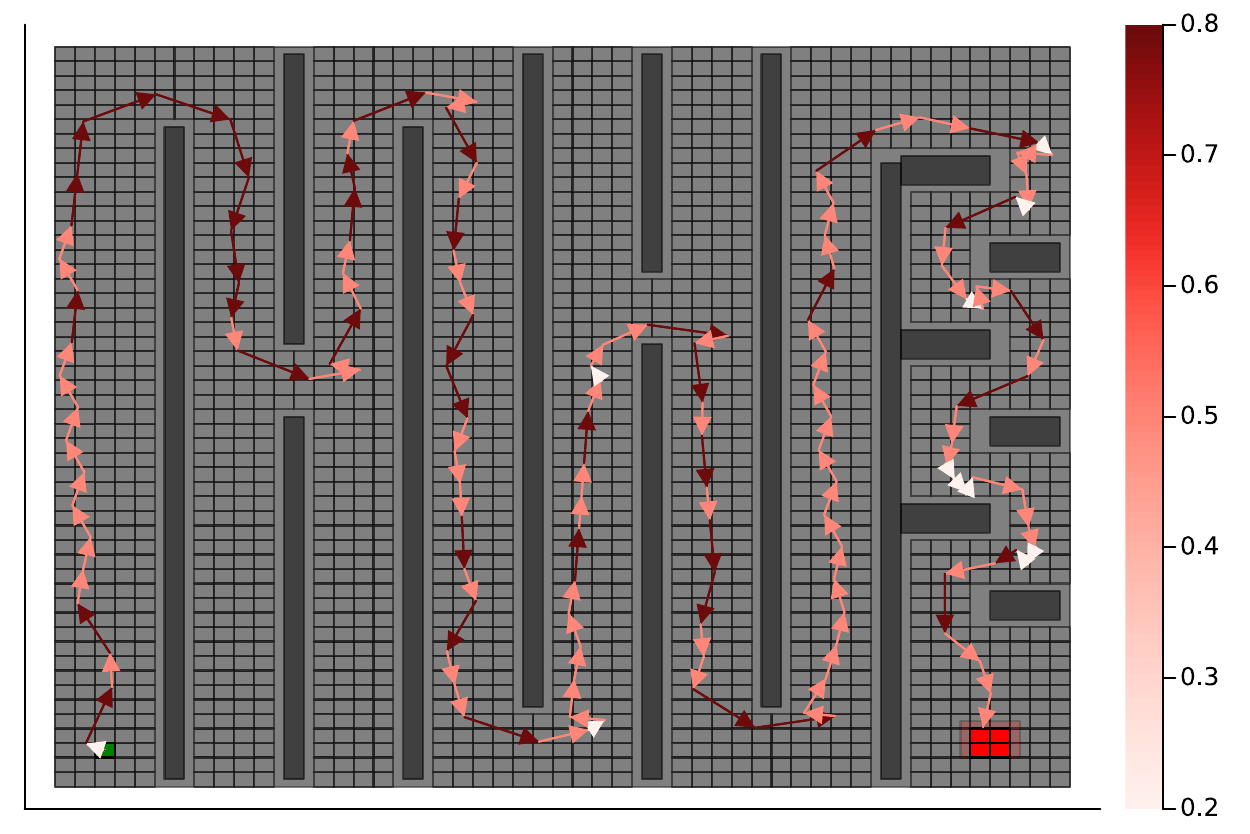}
    \caption{An example trajectory, projected onto the first two components of the system, for the path-planning problem.
    The initial set $\set{I}_1 = \{x_0\}$, target set $\set{F}_1$, and obstacle set $\set{O}_1$ are shown in green, red, and black, respectively.
    The color of the arrow indicates the used timestep. 
    }
    \label{fig:path_planning}
\end{figure}

\subsection{Data-driven approach} \label{subsec:data}

Growth-bound functions provide over-approximations of the reachable sets $F_1(R^{-1}(q), (u, \tau))$, necessary so that Theorem~\ref{th:formal_abstraction} holds. However, they become more and more conservative as the timestep increases \cite{reissig2016feedback}. 
This can lead to infeasibility of the abstract problem by introducing non-determinism, and prevents us from experimenting with longer timesteps, as the over-approximation may intersect obstacles and be discarded.

To mitigate this, we propose a data-driven abstraction approach. For each abstract state-input pair $(q,(u,\tau))$ of the abstract system, we sample $N$ points $x_1 \in R^{-1}(q)$. We observe $x_1^+\in F_1(x_1,(u,\tau))$, and if there exists $q^+ \in \set{X}_2$ such that $(x_1^+,q^+)\in R$, we add a transition $q^+\in F_2(q,(u, \tau))$.

This has the effect of \emph{under-approximating} the reachable set, as we may overlook a trajectory. 
As a result, Theorem~\ref{th:formal_abstraction} no longer applies. Using this data-driven approach, we are able to consider larger timesteps, as shown in Table~\ref{tab:comparison}, which leads to improved expected worst-case cost.  
In Figure~\ref{fig:path_planning_data}, we show a sample trajectory obtained with this method. As with the growth-bound approach, the controller uses a combination of different timesteps.


\begin{figure}
    \centering
\includegraphics[width=0.75\linewidth]{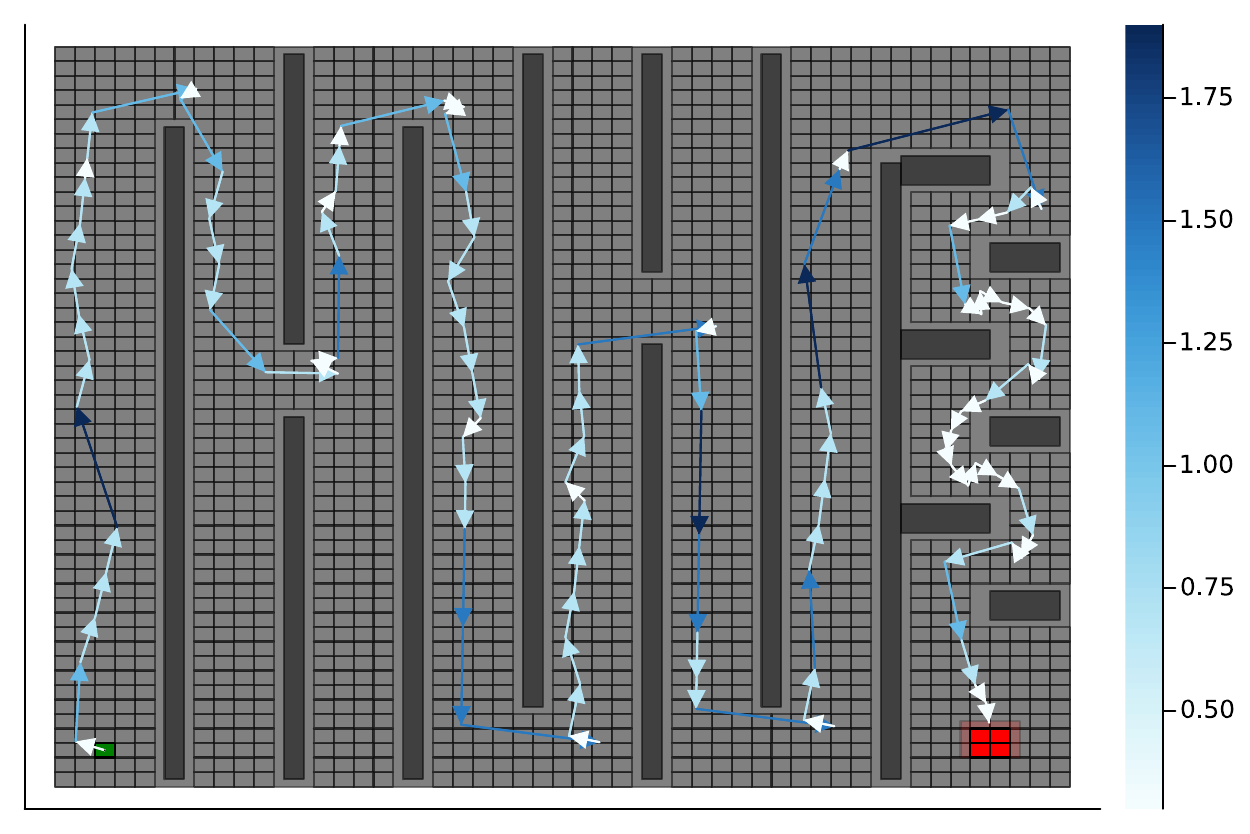}
    \caption{An example trajectory for the path-planning problem for the \textbf{data driven approach}. The color of the arrows indicate the duration of the timestep. 
    }
    \label{fig:path_planning_data}
\end{figure}



\section{CONCLUSIONS AND FUTURE WORKS}

In this work, we introduced a temporal abstraction framework for abstraction-based control of continuous-time systems. By allowing multiple timesteps within a unified abstraction, we addressed the trade-off between conservatism and feasibility that arises in fixed-step discretizations. We proved that these abstractions preserve the feedback refinement relation, enabling formal guarantees to be carried over from the abstract to the concrete system. We further demonstrated the effectiveness of our approach through numerical experiments, showing improved worst-case performance without increasing the size of the abstractions.



Future work includes optimizing time and space discretization, reducing conservatism in reachability computation, and exploring probabilistic guarantees in data-driven settings.


\begin{spacing}{1.0}
    \bibliographystyle{abbrv}
    \bibliography{ref}
\end{spacing}

\end{document}